\documentclass{IEEEtran}
\usepackage{cite}

\usepackage{amsmath}
\usepackage{amssymb}
\usepackage{amsthm}
\usepackage{epsf}
\usepackage{epsfig}
\usepackage{multirow}
\usepackage{scalefnt}
\usepackage{subcaption}
\usepackage{siunitx}
\usepackage[normalem]{ulem}
\usepackage{xcolor}
\usepackage{enumerate}
\usepackage{empheq} 




\newtheorem{lemma}{Lemma}


 \def\cN{{\mathcal{N}}}


\def\given{\hspace{0.2mm}|\hspace{0.2mm}}






   
   \def\btheta{{\pmb{\theta}}}

   
  \def\bPhi{{\pmb{\Phi}}}

\def\bzeros{{\pmb{0}}}


 \def\bv{{\mathbf{v}}} \def\bw{{\mathbf{w}}} \def\bx{{\mathbf{x}}}
 \def\bz{{\mathbf{z}}}

\def\bA{{\mathbf{A}}} \def\bB{{\mathbf{B}}}  
 \def\bF{{\mathbf{F}}}  \def\bH{{\mathbf{H}}}
\def\bI{{\mathbf{I}}} \def\bJ{{\mathbf{J}}} \def\bK{{\mathbf{K}}} 
   \def\bP{{\mathbf{P}}}
\def\bQ{{\mathbf{Q}}} \def\bR{{\mathbf{R}}} \def\bS{{\mathbf{S}}} 
 \def\bV{{\mathbf{V}}}


\begin{document}

\title{Adaptive Temporal Decorrelation of State Estimates}
\author{
  \IEEEauthorblockN{Zachary Chance}\\
  \IEEEauthorblockA{
  \textit{MIT Lincoln Laboratory}\\
  Lexington, MA \\
  zachary.chance@ll.mit.edu}
  \thanks{
  DISTRIBUTION STATEMENT A. Approved for public release. Distribution is unlimited.~\(||\) This material is based upon work supported by the Department of the Air Force under Air Force Contract No. FA8702-15-D-0001. Any opinions, findings, conclusions or recommendations expressed in this material are those of the author(s) and do not necessarily reflect the views of the Department of the Air Force.~\(||\) Delivered to the U.S. Government with Unlimited Rights, as defined in DFARS Part 252.227-7013 or 7014 (Feb 2014). Notwithstanding any copyright notice, U.S. Government rights in this work are defined by DFARS 252.227-7013 or DFARS 252.227-7014 as detailed above. Use of this work other than as specifically authorized by the U.S. Government may violate any copyrights that exist in this work.

  © 2024 Massachusetts Institute of Technology.
  }
}

\maketitle

\begin{abstract}
Many commercial and defense applications involve multisensor, multitarget tracking, requiring the fusion of information from a set of sensors. An interesting use case occurs when data available at a central node (due to geometric diversity or retrodiction) allows for the tailoring of state estimation for a target. For instance, if a target is initially tracked with a maneuvering target filter, yet the target is clearly not maneuvering in retrospect, it would be beneficial at the fusion node to refilter that data with a non-maneuvering target filter. If measurements can be shared to the central node, the refiltering process can be accomplished by simply passing source measurements through an updated state estimation process. It is often the case for large, distributed systems, however, that only track information can be passed to a fusion center. In this circumstance, refiltering data becomes less straightforward as track states are linearly dependent across time, and the correlation needs to be properly accounted for before/during refiltering. In this work, a model-based temporal decorrelation process for state estimates with process noise will be studied. A decorrelation procedure will be presented based on a linear algebraic formulation of the problem, and process noise estimates will be created that ensure a conservative system state estimate. Numerical examples will be given to demonstrate the efficacy of the proposed algorithm.
\end{abstract}

\section{Introduction}
It is often advantageous to combine information from multiple sensors when tracking a scene with multiple targets. Fusion of multisensor data can greatly improve track accuracy by providing geometric diversity. For smaller sensor networks or circumstances where information exchange is not constrained, this is often approached by sharing measurement-level data, performing data association, and filtering using all sensor contributions. In many realistic situations, however, full sharing of measurements can be impractical due to number of sensors, geographic span of sensor network, or communications limitations. To reduce the amount of data transmitted, track states are sent to a central node for combination. Before any filtering of sensor data can be performed, the gathered track states have to be processed to deal with their correlation both in time and across sensors~\cite{BarShalom95, Chong18}. A comparison of measurement-based and track-based architectures is shown in Figure~\ref{fig:arch}.

\begin{figure}[h!]
\begin{subfigure}[b]{1\columnwidth}
  \centering
  \includegraphics[width=0.85\textwidth]{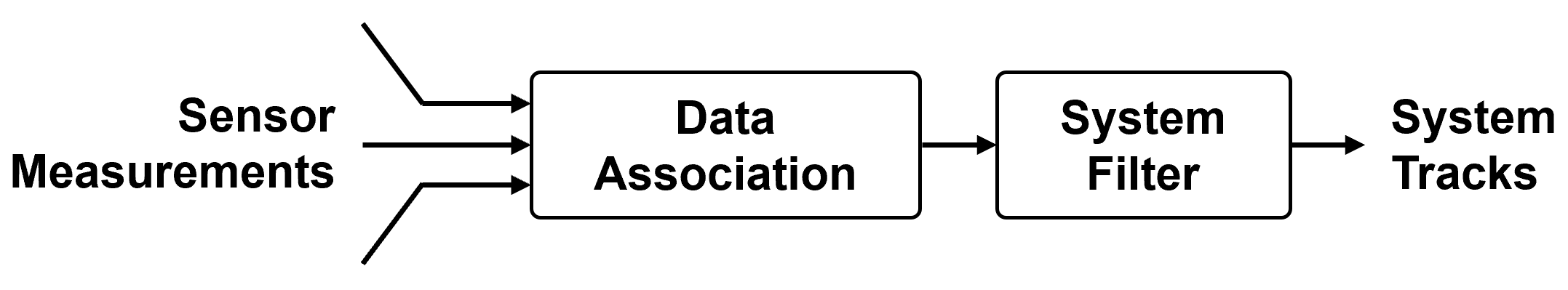}
  \caption{Multisensor, multitarget tracking with sensor measurements.}\vspace{3mm}
\end{subfigure}
\begin{subfigure}[b]{1\columnwidth}
  \centering
  \includegraphics[width=1.0\textwidth]{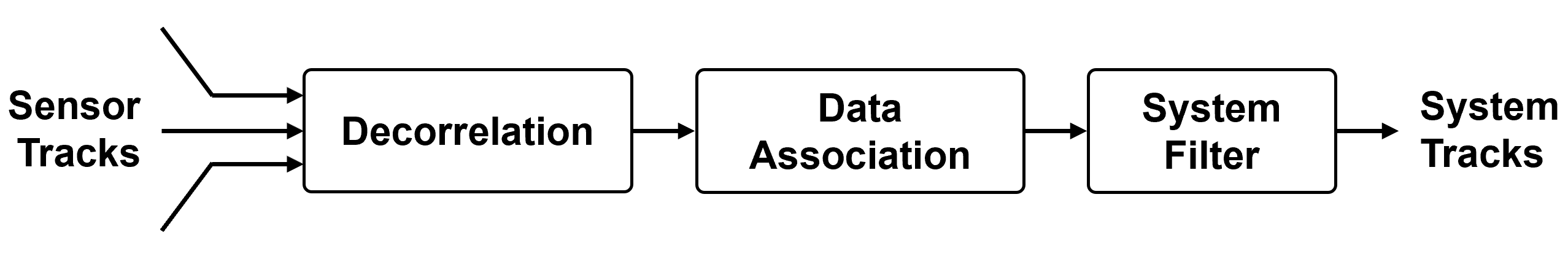}
  \caption{Multisensor, multitarget tracking with sensor tracks.}
\end{subfigure}
\caption{Multisensor tracking architectures.}\label{fig:arch}
\end{figure}

A major benefit of combining information over time and sensors is the ability to disambiguate the history of a track scene, i.e., retrodiction, through revised data association or more appropriate filter choices. For example, many real-time trackers that are used to control sensors employ process noise to be able to maintain custody on targets throughout unpredictable maneuvers. Upon acquiring a set of data, it may be clear that the target (via fitting or filtering metrics) was not maneuvering over the observed time span. In this case, the previous data history can be refiltered to create a tailored state estimate over the past data; this will ideally create an improved state estimate over the history and up to the current time. (Note that a tailored state estimate could also be approached via a multiple model filter~\cite{Mazor98, Hofbaur04}; however, these can still experience lag as compared to the benefit of hindsight in retrodiction and/or can be used in concert with retrodiction.) The major hurdle encountered with refiltering scene histories in many systems is a preliminary requirement of decorrelating the track information from the source sensors---this is the main focus of this paper. More specifically, the scope of this work is the investigation of temporal decorrelation of state estimates from a single source sensor to enable refiltering at a fusion node. It is assumed that the local measurement model is not known to the fusion node, but some aspects of the process noise model are available (discussed below).

Temporal decorrelation approaches can vary depending on the filtering assumptions, e.g., noise models, bias models, and how much information is known about the source sensors, e.g., measurement model. If the observed targets undergo ballistic or other deterministic motion, the filtering at the sensors do not require process noise. In this case, decorrelation of track states both across time and sensors was studied in~\cite{Frenkel95, Drummond95, Drummond02}. In practical applications, some level of process noise is typically required in tracking algorithms to compensate for potential maneuvers, unmodeled dynamics, or residual sensor measurement errors. The presence of process noise precludes the use of the methods in~\cite{Frenkel95, Drummond95, Drummond02} as it violates their founding assumptions and would generate decorrelated information with invalid covariances. To address this, this work proposes a decorrelation method that can account for process noise used in the inputted state estimates, whether the true process noise model is known at the central node or requires conservative estimation. 

The approach taken to develop the new decorrelation technique is similar to~\cite{Pao96}, which first takes a linear algebraic perspective to form effective measurement information from the incoming state estimates before passing to any filtering or fusion algorithms. However, to remove any need for knowing measurement mapping information from each constituent sensor, measurement reconstruction for the purposes of this work is constrained to exist in a common state space; this allows for a common decorrelation framework and enables the estimation of effective process noise at the central node. (Note that this is similar to the construction of a global information gain~\cite{Chong18}.) Thus, the main contribution of this work is to extend a measurement reconstruction methodology that exploits a common state space used at the fusion node and then construct a procedure for conservatively estimating process noise parameters used at contributing sensors. The effective measurement information with estimated process noise will then be used to perform refiltering of previous track data.

The paper is organized as follows. In Section~\ref{sec:ssem}, the formulation of effective measurement information from track states in a common state space is discussed, including the treatment of process noise used in a constituent sensor's filtering algorithm. Next, the case of unknown process noise is investigated in Section~\ref{sec:proc}, and a method of conservatively estimating process noise model parameters is given. To demonstrate the utility of the measurement reconstruction and process noise estimation techniques, simulations using a phased array radar model and ballistic target are given in Section~\ref{sec:sim}. Finally, overall conclusions and desired next steps are discussed in Section~\ref{sec:conc}.

\section{State Space Equivalent Measurements}\label{sec:ssem} 
To begin, consider a sensor observing a target while estimating the target's kinematic state.  The discrete-time kinematic state of the target $\bx_k \in \mathbb{R}^N$ evolves according to the following plant equation:
\[
\bx_{k} = f(\bx_{k-1}) + \bv_{k-1},
\]
\noindent where \(f(\cdot)\) is a known dynamics function from time $k-1$ to time $k$, and $\bv_{k-1}$ is additive process noise.  The process noise, $\bv_{k-1}$, is assumed to be white and Gaussian distributed such that $\bv_{k-1} \sim \cN(\bzeros, \bQ_{k-1})$.  The measurements, \(\bz_k \in \mathbb{R}^M\), at the sensor are of the form
\[
\bz_k = h(\bx_k) + \bw_k,
\]
\noindent where \(h(\cdot)\) is a measurement function known to the sensor and $\bw_k$ is measurement noise.  The measurement noise, $\bw_{k}$, is assumed to be white, independent of the process noise, and Gaussian distributed such that $\bw_{k} \sim \cN(\bzeros, \bR_k)$.

Due to the potential nonlinearity of the dynamics and measurement functions, it is assumed that the sensor employs an extended Kalman filter~\cite{BarShalom01} to track the kinematic state of the target.  In this case, the target state estimate at time $k$ given the data up until time $k$, $\hat{\bx}_{k\given k}$, and its covariance, $\bP_{k\given k}$, evolve according to the following equations:
\begin{subequations}\label{eq:kalman}
\begin{align}
\hat{\bx}_{k\given k-1} &= f\left(\hat{\bx}_{k-1\given k-1}\right),\\
\bP_{k\given k-1} &= \bF_{k}\bP_{k-1\given k-1}\bF_{k}^T + \bQ_{k-1},\label{eq:predp}\\
\hat{\bx}_{k\given k} &= \hat{\bx}_{k\given k-1} + \bK_{k}\left(\bz_{k} - h\left(\hat{\bx}_{k\given k-1}\right)\right),\label{eq:stateup}\\
\bP_{k\given k} &= \bP_{k\given k-1} - \bK_{k}\bS_{k}\bK_{k}^T,\label{eq:covup}\\
\bK_{k} &= \bP_{k\given k-1}\bH_{k}^T\bS^{-1}_{k},\\
\bS_{k} &= \bH_{k}\bP_{k\given k-1}\bH_{k}^T + \bR_{k}.
\end{align}
\end{subequations}
\noindent where \(\bF_{k}\) and \(\bH_{k}\) are Jacobians of the dynamics and measurement functions such that
\begin{align*}
  \bF_{k} &= \left.\frac{\partial}{\partial\bx}f\left(\bx\right)\right|_{\bx\hspace{0.5mm}=\hspace{0.5mm}\hat{\bx}_{k-1\given k-1}},\\
  \bH_{k} &= \left.\frac{\partial}{\partial\bx}h\left(\bx\right)\right|_{\bx\hspace{0.5mm}=\hspace{0.5mm}\hat{\bx}_{k\given k-1}}.
\end{align*}
\noindent Now, at the fusion node, the sensor's state estimates and covariances at each time instance, \(\hat{\bx}_{k\given k}\) and \(\bP_{k\given k}\), are obtained. The measurement model given by the measurement matrices, \(\bH_k\), and measurement covariance matrices, \(\bR_k\), are assumed to be unavailable to the fusion node, however. The dynamic model (and thus the matrices \(\bF_k\)) are assumed to be known to the fusion node.

In many cases, filtering may want to be performed again at a central data node using a different set of parameters, e.g., process noise, measurement noise; however, as the central node only has access to the state estimates and covariances from each sensor, this introduces the nuisance of reverse engineering effective measurement data from the history of state estimates~\cite{Pao96} (sometimes also called the construction of \emph{pseudomeasurements}). In order to accomplish this, it is desired to obtain quantities unknown to the fusion node, i.e., Kalman gain, measurement vector and covariance, in terms of the known state estimates and covariances. To begin, one can first analyze further the filter equations in (\ref{eq:kalman}).  Using the matrix inversion lemma~\cite{Golub96}, the covariance update equation (\ref{eq:covup}) can be rewritten in terms of inverses~\cite{BarShalom01} as
\begin{align}
\bP^{-1}_{k\given k} &= \bP^{-1}_{k\given k-1} + \bH_{k}^T\bR_{k}^{-1}\bH_{k}.\nonumber
\end{align}
\noindent Thus, the measurement covariance, $\bR_{k}$ and measurement matrix, $\bH_{k}$, satisfy the following equation in terms of the propagated past and present state covariances:
\begin{equation}
\bH_{k}^T\bR_{k}^{-1}\bH_{k} = \bP^{-1}_{k\given k} - \bP^{-1}_{k\given k-1}.\label{eq:info}
\end{equation}
\noindent Note that the term \(\bH_{k}^T\bR_{k}^{-1}\bH_{k}\) is often called the \emph{information gain} as it is the change from the propagated covariance to the updated covariance. The information gain will be denoted as \(\bJ_k\) such that
\[
\bJ_k = \bH_{k}^T\bR_{k}^{-1}\bH_{k}.  
\]
\noindent Continuing, the covariance update equation (\ref{eq:covup}) can also be rewritten as
\begin{align}
\bP_{k\given k} &= \left(\bI_N - \bK_{k}\bH_{k}\right)\bP_{k\given k-1}.\label{eq:kh}
\end{align}
\noindent where \(\bI_n\) is an \(n \times n\) identity matrix. Using the nonsingularity of the covariance \(\bP_{k\given k-1}\), the Kalman gain, $\bK_{k}$, and measurement matrix, $\bH_{k}$, obey
\begin{align}
\bK_{k}\bH_{k} &= \bI_N - \bP_{k\given k}\bP^{-1}_{k\given k-1}.\label{eq:k}
\end{align}
\noindent Finally, the state update equation (\ref{eq:stateup}) can be expressed as
\begin{align}
\hat{\bx}_{k\given k} &= \hat{\bx}_{k\given k-1} + \bK_{k}\left(\bz_{k} - h\left(\hat{\bx}_{k\given k-1}\right)\right),\nonumber\\
&= \hat{\bx}_{k\given k-1} - \bK_{k}h\left(\hat{\bx}_{k\given k-1}\right) + \bK_{k}\bz_{k},\label{eq:z1}
\end{align}
\noindent Rearranging (\ref{eq:z1}), one obtains
\begin{equation}
  \bK_{k}\bz_{k} = \hat{\bx}_{k\given k} - \hat{\bx}_{k\given k-1} + \bK_{k}h\left(\hat{\bx}_{k\given k-1}\right).\label{eq:z2}
\end{equation}
In summary, the three following equations are of interest:
\begin{align}
\bH_{k}^T\bR_{k}^{-1}\bH_{k} &= \bP^{-1}_{k\given k} - \bP^{-1}_{k\given k-1},\nonumber\\
\bK_{k}\bH_{k} &= \bI_N - \bP_{k\given k}\bP^{-1}_{k\given k-1},\nonumber
\end{align}
\begin{equation}
\bK_{k}\bz_{k} = \hat{\bx}_{k\given k} - \hat{\bx}_{k\given k-1} + \bK_{k}h\left(\hat{\bx}_{k\given k-1}\right),\nonumber
\end{equation}
\noindent where $\bP_{k\given k-1}$ implicitly depends on $\bF_k$, $\bQ_{k-1}$, and $\bP_{k-1\given k-1}$ as in (\ref{eq:predp}).

Assuming, for now, that the dynamics function, $f(\cdot)$, and the process noise covariances, $\bQ_{k-1}$, are known, the only unknowns in the equations above are $\bz_{k}$, $\bR_{k}$, $\bK_{k}$, and $\bH_{k}$.  In fact, the equations above only specify a subspace of the unknowns; it can be observed that the solutions for $\bR_{k}$ and $\bH_{k}$ are not unique.  This makes intuitive sense as it is possible for two different measurement mechanisms with appropriate uncertainties to garner the same information from the target state.  However, the original mapping from target space to measurement space is not of necessity to a central node.  Instead, it is possible to solve the equations for the equivalent measurement (or gain in information) in the same space as the target state---these reconstructed measurements will now be referred to as \emph{state space equivalent measurements} (SSEM).

To form the set of SSEM, the initial step is to assume a measurement function of the form
\begin{equation}
h\left(\bx_k\right) = \bH_k\bx_k,\label{eq:h}
\end{equation}
\noindent where
\begin{equation}
\bH_{k} = \left[\bI_M ~ \bzeros_{M \times (N-M)} \right].
\end{equation}
\noindent where $M$ denotes the dimension of the measurements and $\bzeros_{m \times n}$ is an $m \times n$ matrix of zeros. Note that the dimension of the measurement, $M$, may not be known by the fusion node but can be estimated using numerical methods; this is, however, beyond the scope of this paper.  Using (\ref{eq:h}), the equation (\ref{eq:info}) can be simplified as
\begin{equation}
\left[\begin{array}{cc}
\bR_{k}^{-1}&\bzeros_{M \times (N-M)}\\
\bzeros_{(N-M) \times M}&\bzeros_{(N-M) \times (N-M)}
\end{array}\right] = \bP^{-1}_{k\given k} - \bP^{-1}_{k\given k-1},\label{eq:info2}
\end{equation}
\noindent where it is important to note that \(\bR_{k}^{-1}\) is the inverse of the SSEM measurement covariance matrix, i.e., not the inverse of the original measurement covariance matrix at the sensor. 

Therefore, using (\ref{eq:info2}), one can solve for the SSEM covariance matrix, $\bR_{k}$.  Similarly, the Kalman gain can be obtained by noting that the equation (\ref{eq:k}) becomes
\begin{equation}
\left[\bK_{k}~\bzeros_{N \times (N-M)}\right] = \bI_N - \bP_{k\given k}\bP^{-1}_{k\given k-1}.\label{eq:gain}
\end{equation}
\noindent Revisiting the measurement equation in (\ref{eq:z2}) and substituting the measurement function from (\ref{eq:h}), the following is true:
\begin{align}
  \bK_{k}\bz_{k} &= \hat{\bx}_{k\given k} - \hat{\bx}_{k\given k-1} + \bK_{k}h\left(\hat{\bx}_{k\given k-1}\right),\nonumber\\
  &= \hat{\bx}_{k\given k} - \left(\bI_N - \bK_{k}\bH_{k}\right)\hat{\bx}_{k\given k-1},\nonumber\\
  &= \hat{\bx}_{k\given k} - \bP_{k\given k}\bP^{-1}_{k\given k-1}\hat{\bx}_{k\given k-1},\label{eq:kz}
\end{align}
\noindent where (\ref{eq:kz}) is a consequence of (\ref{eq:kh}). Note that if $M = N$, then $\bK_{k}$ is full rank, and $\bz_{k}$ can be found by using the inverse of $\bK_{k}$.  More commonly, however, not all state dimensions are observed and, thus, $M < N$.  In this circumstance, $\bK_{k}$ is a tall matrix and (\ref{eq:kz}) represents an overdetermined linear system.  Accordingly, $\bz_{k}$ can be found by using the Moore-Penrose pseudoinverse of $\bK_{k}$~\cite{Golub96}.  Therefore, in general,
\begin{equation}
\bz_{k} = \bK^\dagger_{k}\left(\hat{\bx}_{k\given k} - \bP_{k\given k}\bP^{-1}_{k\given k-1}\hat{\bx}_{k\given k-1}\right),\label{eq:z}
\end{equation}
\noindent where $\left(\cdot\right)^\dagger$ is the Moore-Penrose pseudoinverse. There is now a path to reconstructing the measurement covariance matrix via (\ref{eq:info2}), the Kalman gain matrix using (\ref{eq:gain}), and the equivalent measurement vector through (\ref{eq:z}); the complete SSEM procedure is summarized in Lemma~\ref{lemma:ps}.

\begin{lemma}\label{lemma:ps}
Given a dynamics function, \(f(\cdot)\), process noise matrices, $\bQ_{k-1}$, and the rank of the original measurements, $M$, the state space equivalent measurements from a series of state estimates and covariances can be derived by:
\begin{enumerate}
\item Construct propagated past state estimates, $\hat{\bx}_{k\given k-1}$, and covariances matrices, $\bP_{k\given k-1}$, by
\begin{align}
\hat{\bx}_{k\given k-1} &= f\left(\hat{\bx}_{k-1\given k-1}\right),\nonumber\\
\bP_{k\given k-1} &= \bF_{k}\bP_{k-1\given k-1}\bF_{k}^T + \bQ_{k-1},\nonumber
\end{align}
\item Obtain the inverse measurement covariance matrices, $\bR^{-1}_{k}$, using
\[
\left[\begin{array}{cc}
\bR_{k}^{-1}&\bzeros_{M \times (N-M)}\\
\bzeros_{(N-M) \times M}&\bzeros_{(N-M) \times (N-M)}
\end{array}\right] = \bP^{-1}_{k\given k} - \bP^{-1}_{k\given k-1},
\]
\item Calculate the Kalman gain matrices, $\bK_{k}$, with
\[
\left[\bK_{k}~\bzeros_{N \times (N-M)}\right] = \bI_N - \bP_{k\given k}\bP^{-1}_{k\given k-1},
\]
\item Create measurement vectors, $\bz_{k}$, from
\[
\bz_{k} = \bK^{\dagger}_{k}\left(\hat{\bx}_{k\given k} - \bP_{k\given k}\bP^{-1}_{k\given k-1}\hat{\bx}_{k\given k-1}\right).
\]
\end{enumerate}
\end{lemma}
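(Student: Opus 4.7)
The lemma is a summary statement whose content has essentially been assembled piecewise earlier in the section, so the proof plan is to verify that each of the four construction steps is a direct logical consequence of the Kalman recursion (\ref{eq:kalman}) under the SSEM ansatz $h(\bx_k)=\bH_k\bx_k$ with $\bH_k=[\bI_M\ \bzeros_{M\times(N-M)}]$. First I would dispatch Step 1 as immediate: given $f(\cdot)$, $\bQ_{k-1}$, the Jacobian $\bF_k$ (available because the fusion node knows the dynamics and the filtered estimate $\hat{\bx}_{k-1\given k-1}$), and the received covariance $\bP_{k-1\given k-1}$, the EKF prediction equations define $\hat{\bx}_{k\given k-1}$ and $\bP_{k\given k-1}$ by substitution.

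For Step 2, I would invoke the information-form identity (\ref{eq:info}), which is obtained from the covariance update (\ref{eq:covup}) via the matrix inversion lemma, and then substitute the SSEM choice of $\bH_k$ on the left-hand side. The block-structured product $\bH_k^T\bR_k^{-1}\bH_k$ then collapses precisely to the block matrix displayed in the lemma, so $\bR_k^{-1}$ is read off as the leading $M\times M$ block of $\bP_{k\given k}^{-1}-\bP_{k\given k-1}^{-1}$. For Step 3, I would start from (\ref{eq:kh}), which rearranges the covariance update using the nonsingularity of $\bP_{k\given k-1}$ (guaranteed because $\bQ_{k-1}\succeq\bzeros$ is added in (\ref{eq:predp}) to an already positive-definite $\bF_k\bP_{k-1\given k-1}\bF_k^T$ or to its regularization), and then substitute the SSEM $\bH_k$ to produce $[\bK_k\ \bzeros_{N\times(N-M)}]=\bI_N-\bP_{k\given k}\bP_{k\given k-1}^{-1}$, identifying $\bK_k$ as the leading $N\times M$ block.

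For Step 4, I would take the state update (\ref{eq:stateup}), transpose the linear term involving the measurement to isolate $\bK_k\bz_k$ as in (\ref{eq:z2}), substitute $h(\hat{\bx}_{k\given k-1})=\bH_k\hat{\bx}_{k\given k-1}$, and apply (\ref{eq:kh}) once more to arrive at the compact form (\ref{eq:kz}), namely $\bK_k\bz_k=\hat{\bx}_{k\given k}-\bP_{k\given k}\bP_{k\given k-1}^{-1}\hat{\bx}_{k\given k-1}$. Inverting this overdetermined system by the Moore–Penrose pseudoinverse yields the stated formula for $\bz_k$.

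The only step that is not a mechanical rewrite is the pseudoinverse inversion in Step 4, and that is where I would spend the most care. I would argue that $\bK_k$ has full column rank $M$: by Step 2, $\bR_k^{-1}$ is positive definite whenever the filter has actually incorporated measurement information (so the leading block is nonsingular), hence the information gain $\bJ_k=\bH_k^T\bR_k^{-1}\bH_k$ has rank $M$, which forces $\bK_k$ to have rank $M$ through $\bK_k=\bP_{k\given k-1}\bH_k^T\bR_k^{-1}$. Consequently $\bK_k^\dagger\bK_k=\bI_M$, and because the original (unknown) measurement at the source sensor satisfies (\ref{eq:kz}) exactly, the right-hand side lies in the column span of $\bK_k$; the pseudoinverse therefore recovers the exact $\bz_k$ rather than a least-squares surrogate. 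This closes the argument.
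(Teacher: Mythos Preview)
Your proposal is correct and mirrors the paper's own derivation: the lemma is not given a separate proof in the paper but is explicitly presented as a summary of the preceding manipulations of the EKF recursion under the SSEM ansatz \(\bH_k=[\bI_M\ \bzeros]\), exactly as you reconstruct it step by step. One small slip: the identity you invoke for the rank argument should be \(\bK_k=\bP_{k\given k}\bH_k^T\bR_k^{-1}\) (or equivalently \(\bP_{k\given k-1}\bH_k^T\bS_k^{-1}\)), not \(\bP_{k\given k-1}\bH_k^T\bR_k^{-1}\); the full-column-rank conclusion is unaffected, and your consistency argument for the pseudoinverse actually goes beyond what the paper spells out.
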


\noindent Figure~\ref{fig:ssem1} shows an example of extracting SSEM from a set of track states on a ballistic target; the SSEM are plotted against the original measurements (converted into the state space coordinate frame).  Sensor measurements were taken in range-direction-cosine coordinates (described in detail in Section~\ref{sec:sim}) and the state space is the Earth-centered rotation coordinate system~\cite{Hofmann06}. Note that, in this example, the process noise model and its parameters are assumed to be known. It can be observed that the SSEM agree closely with the original detections.  In the following sections, the complexity of addressing unknown process noise parameters is discussed.

\begin{figure}[h!]
  \centering
  \includegraphics[width=0.99\columnwidth]{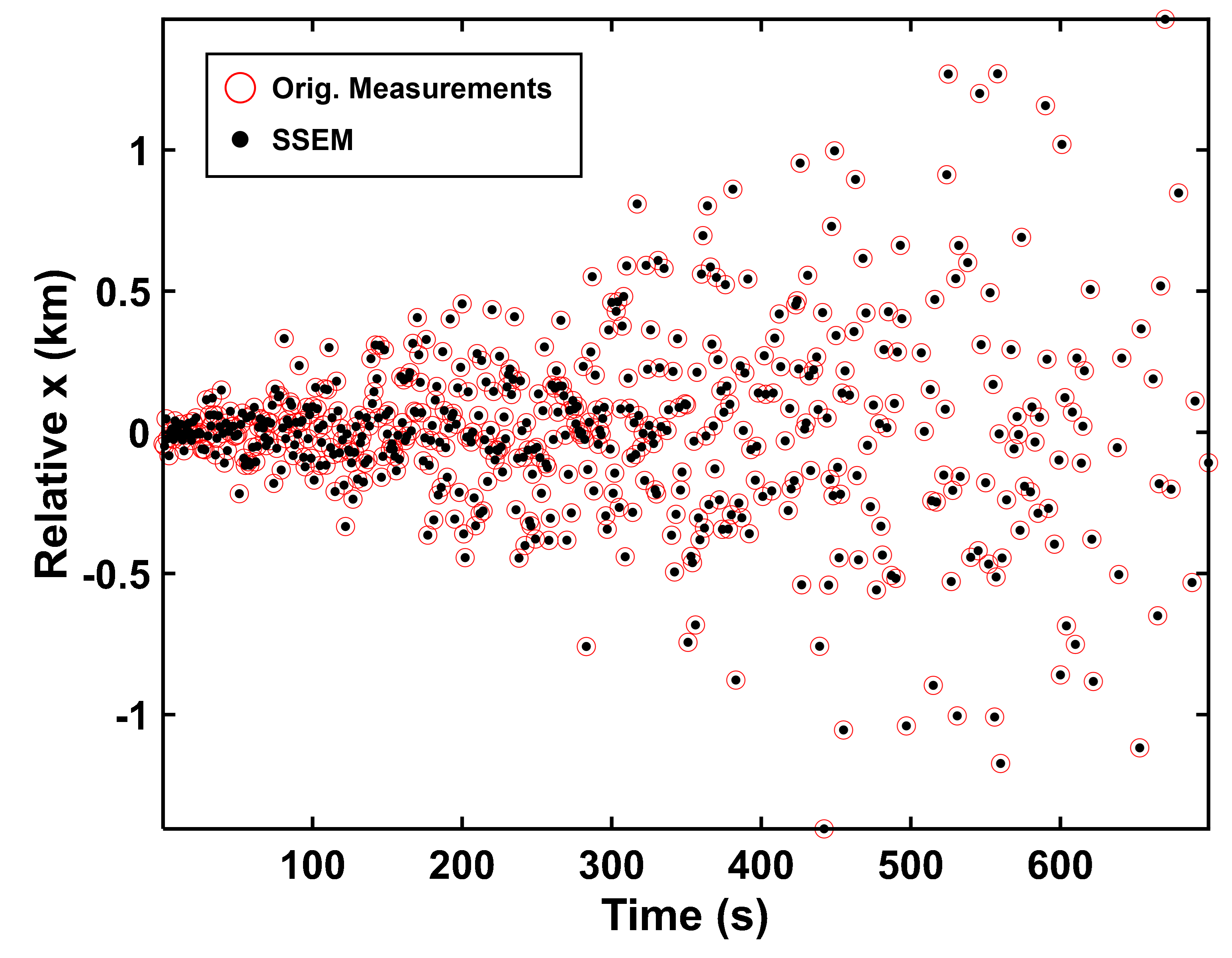}
  \caption{State space equivalent measurements derived from a set of state estimates and covariances constructed using range-direction-cosine measurements; measurements are displayed along the $x$-axis of the target space, Earth-centered rotating coordinates.}\label{fig:ssem1}
\end{figure}

\section{Unknown Process Noise}\label{sec:proc}
To further generalize the procedure in Lemma~\ref{lemma:ps}, it is now assumed that only limited structure of the process noise covariances, $\bQ_{k-1}$, is known, and the unknown aspects of the process noise matrices must be estimated. As a conservative measure, the process noise will be estimated in such a way as to upper bound the true SSEM measurement covariance, $\bR_k$, and, thus, construct a state estimate covariance that encapsulates the true SSEM state covariance. It is assumed that a parametric model of the process noise matrices \(\bQ_{k-1}(\btheta_{k-1})\) is known to the fusion node, but the parameter vector \(\btheta_{k-1}\) needs to be estimated; let \(\hat{\btheta}_{k-1}\) denote the fusion node's estimate of the process noise parameter vector.

Given a process noise parameter vector, \(\btheta\), the corresponding SSEM information gain matrix is defined as 
\[
\bJ_{k}(\btheta) = \bP^{-1}_{k\given k} - \left(\bF_{k}\bP_{k-1\given k-1}\bF_{k}^T + \bQ_{k-1}(\btheta)\right)^{-1}.
\]
\noindent Accordingly, the SSEM measurement covariance matrix for a given process noise parameter vector is 
\[
\bR_{k}(\btheta) = \bJ_{k}^{-1}(\btheta).
\]
Now, let the estimated process noise parameters be obtained using the following lemma:
\begin{lemma}\label{lemma:proc}
If the estimated process noise parameter vector, \(\hat{\btheta}\), is chosen such that
\begin{equation}
\bJ_k\left(\btheta'\right) \succeq \bJ_k\left(\hat{\btheta}\right) \succeq \bzeros,\label{eq:lemma2}
\end{equation}
\noindent for all values of \(\btheta' \neq \hat{\btheta}\), then $\bR_k\left(\hat{\btheta}\right) \succeq \bR_k\left(\btheta\right)$ where $\btheta$ is the true process noise parameter vector.
\end{lemma}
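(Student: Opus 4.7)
The proof reduces to invoking the order-reversing property of matrix inversion on the Loewner cone, so the plan is deliberately short.

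First, I would instantiate the hypothesis \eqref{eq:lemma2} at the specific choice $\btheta' = \btheta$, the true process noise parameter. This immediately yields the chain
\[
\bJ_k(\btheta) \succeq \bJ_k(\hat{\btheta}) \succeq \bzeros,
\]
so both information gain matrices exceed zero in the Loewner order. Before inverting, I would note that $\bR_k(\hat{\btheta})$ appearing in the conclusion implicitly requires $\bJ_k(\hat{\btheta})$ to be positive definite (not merely positive semidefinite); this is consistent with the construction of $\bR_k(\btheta)$ as $\bJ_k^{-1}(\btheta)$ on the previous page, so I would state it as a standing nondegeneracy assumption inherited from that construction. Then the same holds a fortiori for $\bJ_k(\btheta)$.

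Next, I would apply the standard fact that, on the cone of positive definite matrices, the map $\bA \mapsto \bA^{-1}$ is monotonically decreasing in the Loewner order: if $\bA \succeq \bB \succ \bzeros$, then $\bB^{-1} \succeq \bA^{-1}$. Applied to $\bA = \bJ_k(\btheta)$ and $\bB = \bJ_k(\hat{\btheta})$, this gives
\[
\bJ_k^{-1}(\hat{\btheta}) \succeq \bJ_k^{-1}(\btheta),
\]
which by the definition $\bR_k(\cdot) = \bJ_k^{-1}(\cdot)$ is exactly the desired conclusion $\bR_k(\hat{\btheta}) \succeq \bR_k(\btheta)$.

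The only place that requires a bit of care, rather than a genuine obstacle, is justifying the order-reversing property if the paper does not want to cite it as standard. If needed, I would include a one-line derivation: write $\bJ_k(\btheta) = \bJ_k(\hat{\btheta}) + \bDelta$ with $\bDelta \succeq \bzeros$, conjugate by $\bJ_k(\hat{\btheta})^{-1/2}$ to reduce to showing that $(\bI + \bM)^{-1} \preceq \bI$ for any PSD $\bM$, which follows by simultaneous diagonalization since $1/(1+\lambda) \le 1$ for each eigenvalue $\lambda \ge 0$; then conjugating back recovers the claim.
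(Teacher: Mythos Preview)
Your proposal is correct and follows the same approach as the paper's proof: instantiate the hypothesis at the true parameter and invoke the order-reversing property of matrix inversion in the Loewner order. In fact, you are more careful than the paper, which simply asserts the equivalence $\bR_k(\hat{\btheta}) \succeq \bR_k(\btheta) \Leftrightarrow \bJ_k(\btheta) \succeq \bJ_k(\hat{\btheta})$ without commenting on the positive-definiteness needed for invertibility or justifying the monotonicity of inversion.
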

\begin{proof}
It is the goal to show that $\bR_k\left(\hat{\btheta}\right) \succeq \bR_k\left(\btheta\right)$; this is equivalent to showing $\bJ_k\left(\btheta\right) \succeq \bJ_k\left(\hat{\btheta}\right)$. This statement follows from the definition of \(\hat{\btheta}\) in the lemma as it is guaranteed to satisfy \(\bJ_k\left(\btheta'\right) \succeq \bJ_k\left(\hat{\btheta}\right)\) for all \(\btheta' \neq \hat{\btheta}\).
\end{proof}

\noindent Note that the approach proposed in Lemma~\ref{lemma:proc} is similar to applying a minimum encapsulating covariance methodology across the inverse state covariance matrices~\cite{Julier97}; in short, one is obtaining the smallest information gain that satisfies the matrix properties of the filtering equations~(\ref{eq:kalman}). In the following section, the procedure given in Lemma~\ref{lemma:proc} for estimating the process noise parameter vector is illustrated for a spherically-distributed, white Gaussian process noise of unknown power spectral density.

\section{Numerical Example}\label{sec:sim}

To illustrate the use of SSEM with process noise estimation, a series of ballistic target tracking simulations will now be exhibited. The discrete-time dynamic state, \(\bx_k\), is assumed to be three-dimensional and contain the position and velocity components, i.e., \(N = 6\), such that
\[
\bx_k = \left[\begin{array}{c}
  x_k\\
  y_k\\
  z_k\\
  \dot{x}_k\\
  \dot{y}_k\\
  \dot{z}_k
\end{array}\right],  
\]
\noindent where \(x_k, y_k, z_k\) are the three dimensions for position at the \(k^{\mathrm{th}}\) time instance and \(\dot{x}_k\) denotes the first time derivative of the first dimension. 

For the assumed process noise model, a spherically-distributed, white Gaussian process noise derived from a continuous-time process is used.  In this case, the process noise is assumed to originate from a zero-mean vector signal, $\bv(t)$, with autocovariance function given as
\begin{align}
\bV(t, \tau) &= E\left[\bv(t)\bv^T(\tau)\right],\nonumber\\
&= \eta(t)\delta(t - \tau)\bA,\nonumber
\end{align}
\noindent where $\eta(t) > 0$ is the unknown variance parameter of the process noise, $\delta(t)$ is the Dirac delta function, and
\[
\bA = \left[
\begin{array}{cc}
\bzeros_{3\times 3}&\bzeros_{3\times 3}\\
\bzeros_{3\times 3}&\bI_{3}\end{array}
\right].
\]
\noindent Note that the identity matrix in \(\bA\) is mapping the process noise into the acceleration dimension. Further, the variance parameter, $\eta(t)$, is assumed to be piecewise constant such that
\[
\eta(t) = \eta_k\textnormal{ for }t_{k-1}\leq t < t_k.
\]
\noindent In this case, the process noise covariances, $\bQ_{k-1}$, can be written as~\cite{BarShalom01}
\begin{align}
\bQ_{k-1} &= \displaystyle\int_{t_{k-2}}^{t_{k-1}}\bPhi(t_{k-1}, \tau)\bV(\tau, \tau)\bPhi^T(t_{k-1}, \tau)d\tau,\nonumber\\
&= \eta_{k-1}\displaystyle\int_{t_{k-2}}^{t_{k-1}}\bPhi(t_{k-1}, \tau)\bA\bPhi^T(t_{k-1}, \tau)d\tau,\nonumber\\
&= \eta_{k-1}\bB_{k-1},\nonumber
\end{align}
\noindent where
\[
\bB_{k-1} = \displaystyle\int_{t_{k-2}}^{t_{k-1}}\bPhi(t_{k-1}, \tau)\bA\bPhi^T(t_{k-1}, \tau)d\tau,
\]
\noindent is known to the fusion node as the dynamic model and, thus, the state transition matrices, $\bPhi(t, \tau)$, are available. Finally, as the instantaneous values of \(\eta_{k-1}\) are unknown, the process noise parameter vector is now a scalar given by \(\theta_{k-1} = \eta_{k-1}\) and \(\bQ_{k-1}(\theta_{k-1}) = \theta_{k-1} \bB_{k-1}\) where \(\bB_{k-1}\) is known. The estimation procedure from Lemma~\ref{lemma:proc} will be used to form the estimate of \(\theta_{k-1}\), i.e., \(\theta_{k-1}\) is found via a line search to satisfy (\ref{eq:lemma2}).

For this example, the observed target is undergoing pure ballistic motion and state measurements are obtained using a radar-like model. More specifically, the measurements are in range-direction-cosine (RUV) coordinates, commonly used as the measurement basis for phased array radar systems. The measurement function (i.e., conversion to RUV space) can be summarized as
\[
h\left(\bx_k\right) = \left[\begin{array}{c}
\sqrt{x^2 + y^2 + z^2}\\
\frac{x}{\sqrt{x^2 + y^2 + z^2}}\\
\frac{y}{\sqrt{x^2 + y^2 + z^2}}
\end{array}\right],
\]
\noindent where $x$, $y$, and $z$ are components in a sensor-centered Cartesian coordinate system with the \(z\)-axis pointed toward the target, and the output dimensions are the range, $u$, and $v$ dimensions, respectively. It can be observed that the measurement dimension is \(M = 3\).

The radar measurements are characterized by a covariance matrix \(\bR_k\) of the form
\[
  \bR_k = \left[\begin{array}{c c c}
    \sigma_r^2 & 0 & 0\\
    0 & \sigma_u^2 & 0\\
    0 & 0 & \sigma_v^2
  \end{array}\right],
\]
\noindent where \(\sigma_r\), \(\sigma_u\), \(\sigma_v\) are the standard deviations in the range, \(u\), and \(v\) dimensions, respectively. The measurement standard deviation in range, $\sigma_r$, is given by 
\[
  \sigma_r = \frac{c}{2B\sqrt{\rho}},  
\]
\noindent where \(c\) is the speed of light, \(B\) is the radar waveform bandwidth, and \(\rho\) is the signal-to-noise ratio (SNR)~\cite{Skolnik1960}. The SNR for a target at range \(r\) with radar cross-section (RCS) of \(\gamma\) calculated relative to a reference SNR, \(\rho_0\), for a target at range \(r_0\) and radar cross-section of \(\gamma_0\) as
\[
  \rho = \rho_0\left(\frac{\gamma}{\gamma_0}\right)\left(\frac{r_0}{r}\right)^4.
\] 
\noindent Similarly, the measurement standard deviation in the \(u\) and \(v\) dimensions, \(\sigma_u\) and \(\sigma_v\), are assumed to be equal and given by
\[
  \sigma_u = \sigma_v = \frac{\phi}{k_m\sqrt{\rho}},  
\]
\noindent where \(\phi\) is the 3-dB beamwidth of the radar and \(k_m\) is the angular error signal slope for the radar~\cite{Skolnik1960}. Note that the scenario has been chosen to use low-to-moderate SNR, and, therefore, limiting values at high SNR values for the uncertainties are not imposed. More specifically, for this example, the following values for the radar measurements have been chosen:
\begin{itemize}
  \item Radar bandwidth, \(B = 100\)~MHz
  \item Radar beamwidth, \(\phi = 1\)~mrad
  \item Reference SNR, \(\rho_0 = 0\)~dB, for a target of radar cross section \(\gamma_0 = 0\)~dBsm, at a range of \(r_0 = 2700\)~km
  \item Angular error slope, \(k_m = 1.6\)
\end{itemize}
\noindent The ballistic trajectory originates from a notional location at \(2^\circ\) latitude, \(5^\circ\) longitude, and \(0\)~m altitude. It is in flight for 700 seconds and then impacts at another notional point at \(10^\circ\) latitude, \(10^\circ\) longitude, and \(0\)~m altitude. The radar is location at \(0^\circ\) latitude, \(0^\circ\) longitude, and \(0\)~m altitude.

First, a simulation is provided to show using SSEM with estimated process noise can achieve comparable results to having access to the original source sensor measurements. In Figure~\ref{fig:sim1}, position and velocity error comparisons are given between two different filtering mechanisms: (i) an EKF run on the original radar measurements with white, spherically-distributed process noise with power spectral density \(\eta=0.01~\mathrm{m}^2/\mathrm{s}^3\), and (ii) a set of SSEM derived from the state estimates of (i) and then passed through an EKF with the same process noise model as (i). It can be seen that both in position and velocity, the error of both sets of state estimates agree very closely; there is only a slight difference in the very beginning of the track while the estimate is transitioning out of initialization. This shows that despite the SSEM being created from only track state information with estimated process noise, the filtered results are extremely close to using the original measurements.

\begin{figure}[h!]
\begin{subfigure}[b]{\columnwidth}
  \centering
  \includegraphics[width=0.9\columnwidth]{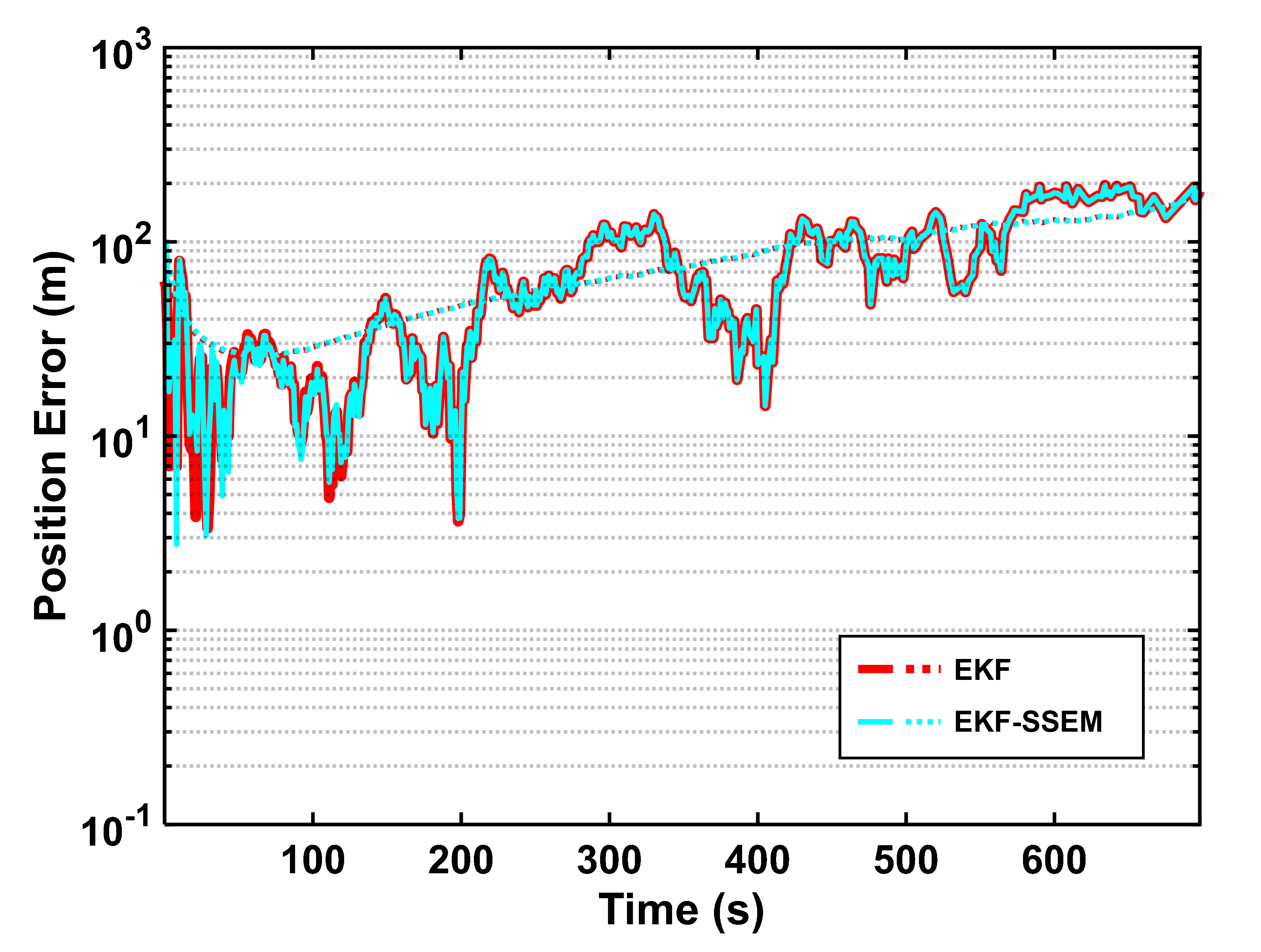}
  \caption{Position error comparison.}\label{fig:sim1_pos}
\end{subfigure}

\begin{subfigure}[b]{\columnwidth}
  \centering
  \includegraphics[width=0.9\columnwidth]{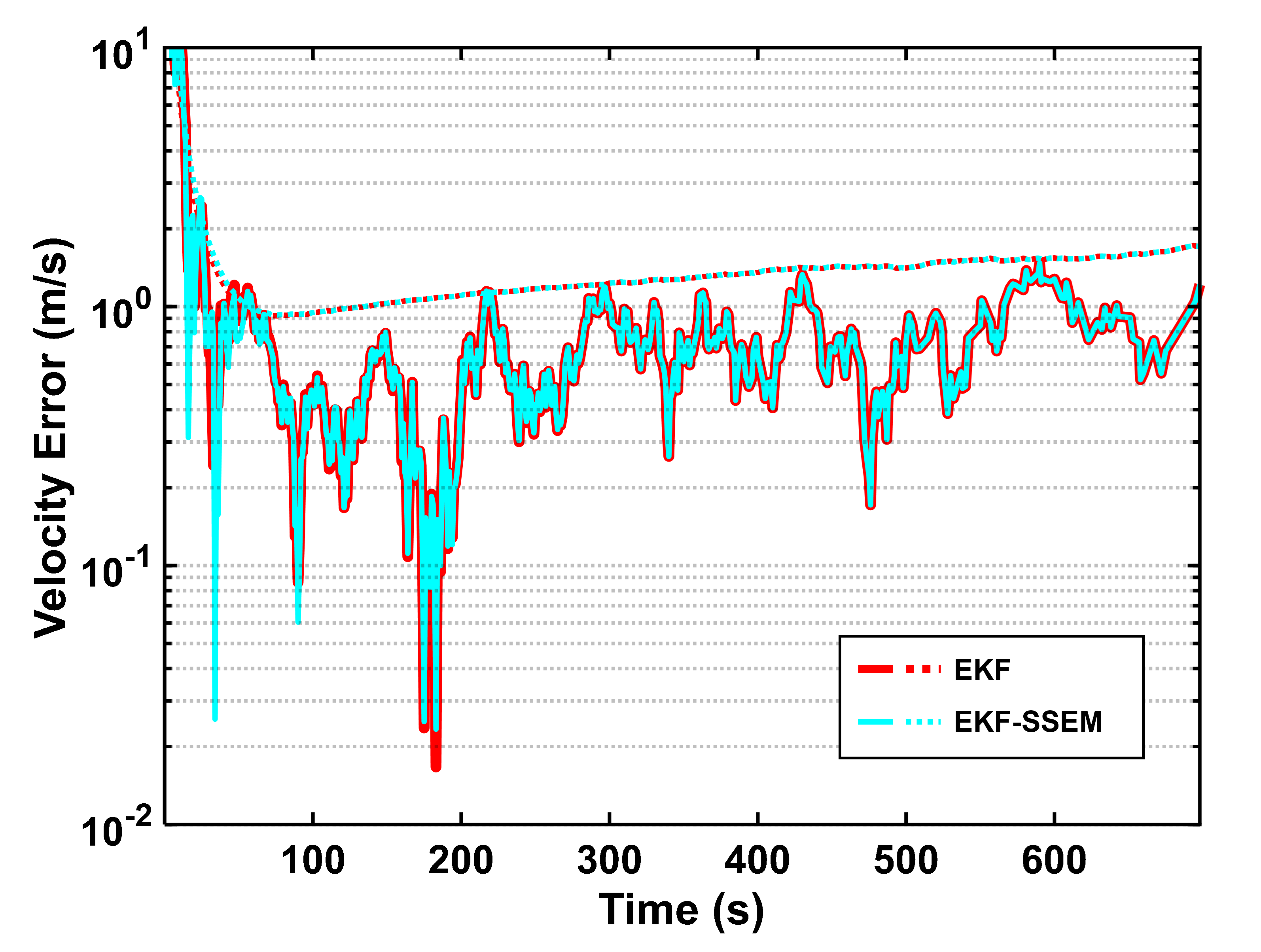}
  \caption{Velocity error comparison.}\label{fig:sim1_vel}
\end{subfigure}

\caption{Error comparison for two different filtering methods: EKF using radar measurements and an EKF using SSEM and estimated process noise. Both filters use a process noise power spectral density of \(\eta=0.01~\mathrm{m}^2/\mathrm{s}^3\). Error is shown as solid line, 1-\(\sigma\) reported error is shown as dashed line.}\label{fig:sim1}
\end{figure}

A case illustrating the benefits of refiltering source track states at a fusion node with different process noise model parameters than the originating source track filter is shown in Figure~\ref{fig:sim2}; this example compares the following two filtering results: (i) an EKF run on the original radar measurements with white, spherically-distributed process noise with power spectral density \(\eta=0.01~\mathrm{m}^2/\mathrm{s}^3\) (i.e., the original source tracks), and (ii) a set of SSEM derived from the original track states run through an EKF with a much lower process noise of \(\eta=0.00001~\mathrm{m}^2/\mathrm{s}^3\) (i.e., the refiltered tracks at the fusion node). In the figures, the main benefit of identifying the true motion mode via retrodiction and refiltering can be seen. By observing the absence of maneuvers over the course of the trajectory and applying a lower process noise, the refiltered estimate at the fusion node achieves significantly higher position and velocity accuracy.

\begin{figure}[h!]
  \begin{subfigure}[b]{\columnwidth}
    \centering
    \includegraphics[width=0.9\columnwidth]{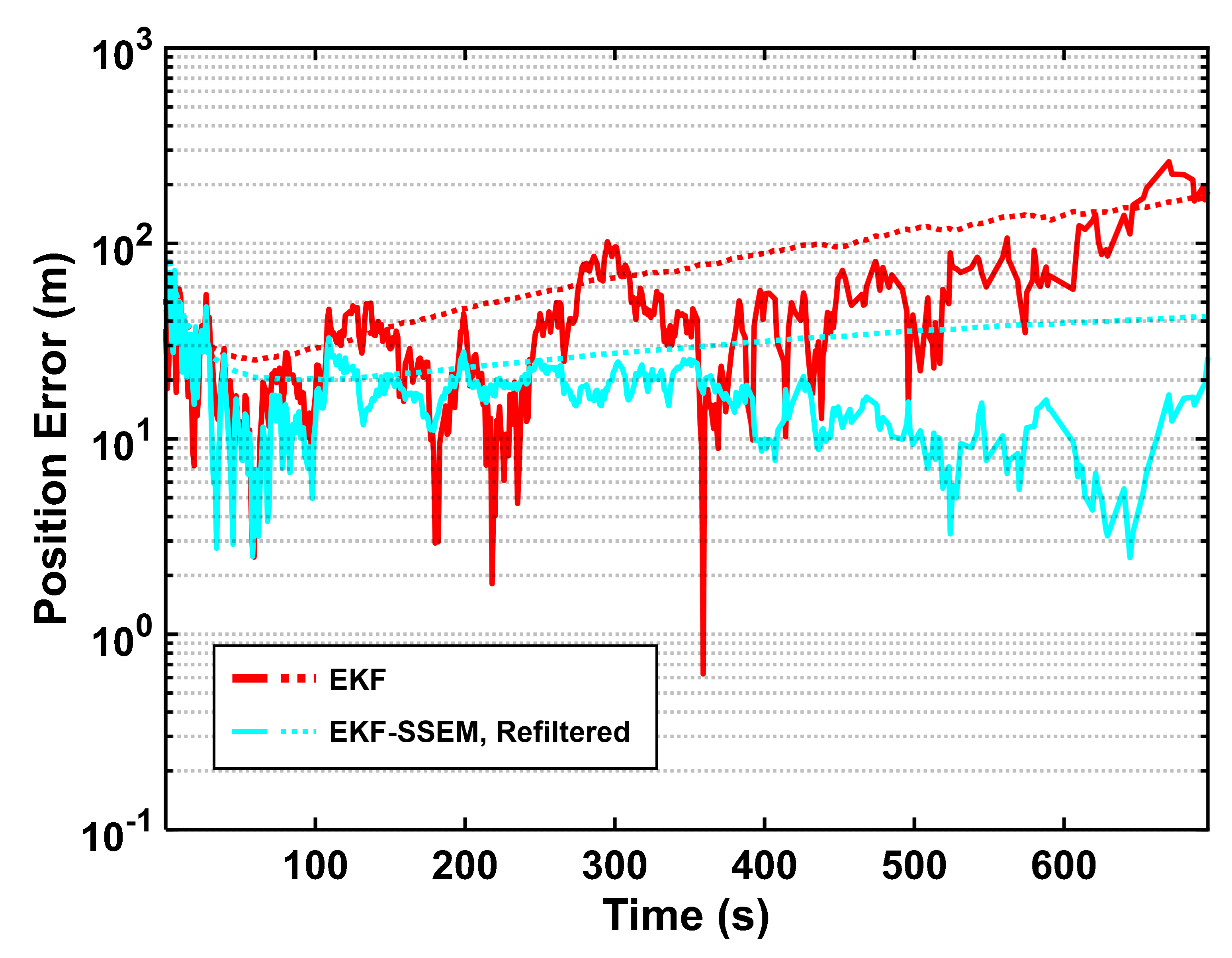}
    \caption{Position error comparison.}\label{fig:sim2_pos}
  \end{subfigure}
  
  \begin{subfigure}[b]{\columnwidth}
    \centering
    \includegraphics[width=0.9\columnwidth]{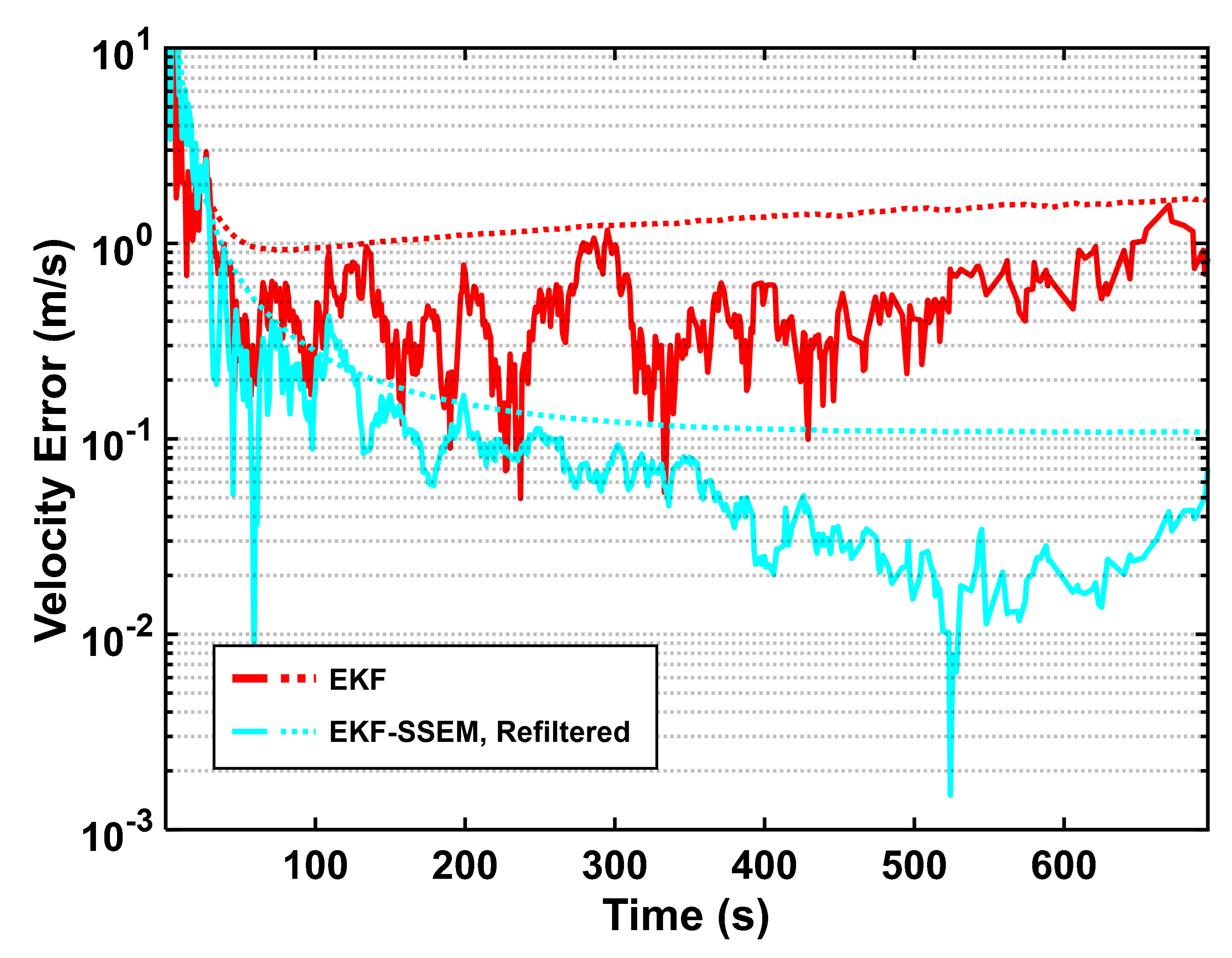}
    \caption{Velocity error comparison.}\label{fig:sim2_vel}
  \end{subfigure}
  
  \caption{Error comparison for two different filtering methods: EKF using radar measurements and the refiltered result using EKF with SSEM and estimated process noise. The former uses a process noise power spectral density of \(\eta=0.01~\mathrm{m}^2/\mathrm{s}^3\), and the latter uses power spectral density of \(\eta=0.00001~\mathrm{m}^2/\mathrm{s}^3\). Error is shown as solid line, 1-\(\sigma\) reported error is shown as dashed line.}\label{fig:sim2}
  \end{figure}

\section{Conclusions and Future Work}\label{sec:conc}

In this work, the temporal decorrelation of sequential state estimates is studied to enable refiltering at a track fusion node, where sensor measurement models and certain aspects of process noise models are unknown to the fusion node. The proposed technique is a generalization of the temporal decorrelation techniques proposed in~\cite{Frenkel95,Drummond95} using effective measurement reconstruction~\cite{Pao96} and accounting for process noise used in the inputted state estimates. Using this setup, adaptation to unknown process noise is developed to ensure a conservative state estimate at the central node. Results are then given to show the efficacy of the reconstructed measurements and process noise estimation. 

To further extend the proposed methods, it is also a goal to show that the repeated measurements between reported state estimates can be accounted for in the decorrelation process. This is important as track information is often reported to a central node at a lower rate than the native measurement rate. Using downsampled track states have been shown to work empirically, with a similar trade-off incurred as tracklet fusion with process noise and a mismatch in fusion and measurement rates~\cite{Chong14}, i.e., as the fusion rate decreases relative to measurement rate, an optimality gap is observed; a formal discussion is desired for subsequent work.

Finally, the author would like to thank the reviewers for their valuable input, which improved the paper and broadened the relevant references.

\bibliographystyle{IEEEtran}
\bibliography{pseudo.bib}

\end{document}